\newcommand{\notimplies}{\;\not\!\!\!\implies}
\definecolor{mittelblau}{RGB}{0, 126, 198}
\definecolor{violettblau}{cmyk}{0.9, 0.6, 0, 0}
\definecolor{rot}{RGB}{238, 28 35}
\definecolor{apfelgruen}{RGB}{140, 198, 62}
\definecolor{gelb}{RGB}{255, 229, 0}
\definecolor{orange}{RGB}{244, 111, 33}
\definecolor{pink}{RGB}{237, 0, 140}
\definecolor{lila}{RGB}{128, 10, 145}
\definecolor{hellgrau}{RGB}{224, 224, 224}
\definecolor{mittelgrau}{RGB}{128, 128, 128}
\definecolor{dunkelgrau}{RGB}{80,80,80}
\definecolor{anthrazit}{RGB}{19, 31, 31}
\definecolor{darkgreen}{RGB}{34,139,34}
\definecolor{aqua}{RGB}{0, 255, 255}
\definecolor{kit-green100}{rgb}{0,.59,.51}
\definecolor{kit-green70}{rgb}{.3,.71,.65}
\definecolor{kit-green50}{rgb}{.50,.79,.75}
\definecolor{kit-green30}{rgb}{.69,.87,.85}
\definecolor{kit-green15}{rgb}{.85,.93,.93}
\definecolor{KITgreen}{rgb}{0,.59,.51}
\definecolor{KITpalegreen}{RGB}{130,190,60}
\colorlet{kit-maigreen100}{KITpalegreen}
\colorlet{kit-maigreen70}{KITpalegreen!70}
\colorlet{kit-maigreen50}{KITpalegreen!50}
\colorlet{kit-maigreen30}{KITpalegreen!30}
\colorlet{kit-maigreen15}{KITpalegreen!15}
\definecolor{KITblue}{rgb}{.27,.39,.66}
\definecolor{kit-blue100}{rgb}{.27,.39,.67}
\definecolor{kit-blue70}{rgb}{.49,.57,.76}
\definecolor{kit-blue50}{rgb}{.64,.69,.83}
\definecolor{kit-blue30}{rgb}{.78,.82,.9}
\definecolor{kit-blue15}{rgb}{.89,.91,.95}
\definecolor{KITyellow}{rgb}{.98,.89,0}
\definecolor{kit-yellow100}{cmyk}{0,.05,1,0}
\definecolor{kit-yellow70}{cmyk}{0,.035,.7,0}
\definecolor{kit-yellow50}{cmyk}{0,.025,.5,0}
\definecolor{kit-yellow30}{cmyk}{0,.015,.3,0}
\definecolor{kit-yellow15}{cmyk}{0,.0075,.15,0}
\definecolor{KITorange}{rgb}{.87,.60,.10}
\definecolor{kit-orange100}{cmyk}{0,.45,1,0}
\definecolor{kit-orange70}{cmyk}{0,.315,.7,0}
\definecolor{kit-orange50}{cmyk}{0,.225,.5,0}
\definecolor{kit-orange30}{cmyk}{0,.135,.3,0}
\definecolor{kit-orange15}{cmyk}{0,.0675,.15,0}
\definecolor{KITred}{rgb}{.63,.13,.13}
\definecolor{kit-red100}{cmyk}{.25,1,1,0}
\definecolor{kit-red70}{cmyk}{.175,.7,.7,0}
\definecolor{kit-red50}{cmyk}{.125,.5,.5,0}
\definecolor{kit-red30}{cmyk}{.075,.3,.3,0}
\definecolor{kit-red15}{cmyk}{.0375,.15,.15,0}
\definecolor{KITpurple}{RGB}{160,0,120}
\colorlet{kit-purple100}{KITpurple}
\colorlet{kit-purple70}{KITpurple!70}
\colorlet{kit-purple50}{KITpurple!50}
\colorlet{kit-purple30}{KITpurple!30}
\colorlet{kit-purple15}{KITpurple!15}
\definecolor{KITcyanblue}{RGB}{80,170,230}
\colorlet{kit-cyanblue100}{KITcyanblue}
\colorlet{kit-cyanblue70}{KITcyanblue!70}
\colorlet{kit-cyanblue50}{KITcyanblue!50}
\colorlet{kit-cyanblue30}{KITcyanblue!30}
\colorlet{kit-cyanblue15}{KITcyanblue!15}
\newcommand\blfootnote[1]{%
    \begingroup
    \renewcommand\thefootnote{}\footnote{#1}%
    \addtocounter{footnote}{-1}%
    \endgroup
}
\newtheorem{theorem}{Theorem}
\newcommand\orangetri{\resizebox{0.2cm}{!}{\tikz{\node[isosceles triangle, fill=KITorange, draw=KITorange, ultra thick] (A) {};}}}
\newcommand\blacktri{\resizebox{0.2cm}{!}{\tikz{\node[isosceles triangle, fill=black, draw=black] (A) {};}}}
\newcommand\orangecirc{\resizebox{0.2cm}{!}{\tikz{\node[circle, fill=KITorange, draw=KITorange, ultra thick] (A) {};}}}
\newcommand\blackcirc{\resizebox{0.2cm}{!}{\tikz{\node[circle, fill=black, draw=black] (A) {};}}}
\begin{document}

\title{Subcode Ensemble Decoding of Polar Codes\\}

\author{%
  \IEEEauthorblockN{%
    Henning Lulei\textsuperscript{\dag},
    Jonathan Mandelbaum\textsuperscript{\dag},
    Marvin Rübenacke\textsuperscript{\ddag},
    Holger Jäkel\textsuperscript{\dag},
  }
   \IEEEauthorblockN{%
    Stephan ten Brink\textsuperscript{\ddag},
    and Laurent Schmalen\textsuperscript{\dag}
  }
  \IEEEauthorblockA{%
    \textsuperscript{\dag}Karlsruhe Institute of Technology (KIT), Communications Engineering Lab (CEL), 76187 Karlsruhe, Germany\\
    \textsuperscript{\ddag}University of Stuttgart, Institute of Telecommunications, 70569 Stuttgart, Germany\\
    Email: {\texttt{jonathan.mandelbaum@kit.edu}
  }
}
}
\makeatletter
\patchcmd{\@maketitle}  %
{\addvspace{0.5\baselineskip}\egroup}
{\addvspace{-0.8\baselineskip}\egroup}
{}
{}
\makeatother

\maketitle

\begin{acronym}
    \acro{AED}[AED]{automorphism ensemble decoding}
    \acro{BI-AWGN}[BI-AWGN]{binary input additive white Gaussian noise}
    \acro{BDMC}[BDMC]{binary discrete memoryless channel}
    \acro{BER}[BER]{bit error rate}
    \acro{BP}[BP]{belief propagation}
    \acro{CRC}[CRC]{cyclic redundancy check}
    \acro{DE}[DE]{density evolution}
    \acro{FER}[FER]{frame error rate}
    \acro{LLR}[LLR]{log-likelihood ratio}
    \acro{LDPC}[LDPC]{low-density parity-check}
    \acro{ML}[ML]{maximum likelihood}
    \acro{PT}[PT]{pre-transformation}
    \acro{SC}[SC]{successive cancellation}
    \acro{SCED}[ScED]{subcode ensemble decoding}
    \acro{SCL}[SCL]{successive cancellation list}
    \acro{URP}[URP]{undecodable received pattern}
\end{acronym}
\newcommand{\PTA}{\mbox{\ac{PT}-A}}
\newcommand{\PTB}{\mbox{\ac{PT}-B}}
\newcommand{\PTC}{\mbox{\ac{PT}-C}}

\begin{abstract}
In the short block length regime, pre-transformed polar codes together with \ac{SCL} decoding possess excellent error correction capabilities.
However, in practice, the list size is limited due to the suboptimal scaling of the required area in hardware implementations. 
\Ac{AED} can improve performance for a fixed list size by running multiple parallel \ac{SCL} decodings on permuted received words, yielding a list of estimates from which the final estimate is selected.
Yet, \ac{AED} is limited to appropriately designed polar codes.
\Ac{SCED} was recently proposed for \acl{LDPC} codes and does not impose such design constraints.
It uses multiple decodings in different subcodes, ensuring that the selected subcodes jointly cover the original code.
We extend \ac{SCED} to polar codes by expressing polar subcodes through suitable \acp{PT}. 
To this end, we describe a framework classifying pre-transformations for pre-transformed polar codes based on their role in encoding and decoding.
Within this framework, we propose a new type of \ac{PT} enabling \ac{SCED} for polar codes, analyze its properties, and discuss how to construct an efficient ensemble. 
\end{abstract}
\acresetall
\begin{IEEEkeywords}
polar codes, ensemble decoding, subcodes
\end{IEEEkeywords}

\section{Introduction}
\blfootnote{This work has received funding from the 
German Federal Ministry of Education and Research (BMBF) within the project Open6GHub (grant agreements 16KISK010 and 16KISK019) and the European Research Council (ERC) under the European Union’s Horizon 2020 research and innovation program (grant agreement No. 101001899).}

\vspace*{-1.2em}
Polar codes are the first family of block codes that provably achieve the capacity of \ac{BDMC} together with low-complexity \ac{SC} decoding asymptotically in the block length \cite{arikan_channel_2009}.
Yet, it is due to their excellent performance at short block lengths---when pre-transformed with a \ac{CRC} code and decoded using \ac{SCL} decoding---that they have been standardized for the control channel in the 5G standard \cite{tal_list_2015,bioglio_design_2020}.
With the anticipated growth of machine-type communication, the requirements on the performance of short block codes are expected to become more stringent in 6G \cite{Miao2024Trends}.
To ensure that polar codes remain a strong contender for the short block length regime, their performance must be further improved, e.g., by enhancing \ac{SCL}-based decoding.
However, practical list sizes are limited, as increasing the list size leads to suboptimal area scaling in hardware implementations.
For instance, \cite[Tab.~1]{10104534}, reports that doubling the list size from $8$ to $16$ increases the required chip area by more than a factor of $5$.
Thus, enhancing \ac{SCL}-based decoding while maintaining a fixed list size is of interest.

Ensemble decoding schemes can improve the error correction performance in the short block length scenario \cite{AED_RMcodes,mandelbaum_subcode_2025,kraft_ensemble_2023,mandelbaum2024endomorphisms,MBBP1}.
While they all rely on parallel paths with independent decoding instances, they can be broadly categorized into two classes \cite{mandelbaum_subcode_2025}: those that alter the noise representation---e.g., \ac{AED} which permutes the received word using an automorphism of the code \cite{AED_RMcodes}---and those that perform decoding on alternative graphical representations \cite{MBBP1}.
The inherent parallel structure of ensemble decoding schemes enables efficient parallel implementation and, in addition, can yield improved error correction performance %
\cite{10104534,mandelbaum_subcode_2025}.
For instance, \ac{SCL}-based \ac{AED} can significantly improve performance compared to stand-alone \ac{SCL} decoding with a larger list size \cite[Fig. 9]{AED_RMcodes}.
However, when considering polar codes, the beneficial effect of automorphisms can be potentially absorbed by symmetries of the decoder \cite[Theorem~2]{AED_RMcodes}.
Hence, \ac{AED} imposes additional constraints, limiting its application to suitably designed polar codes \cite{PolarAEDPillet,Aut_PolarCodes_Geiselhart}. 

\Ac{SCED}, which was introduced for \ac{BP}-based decoding in \cite{mandelbaum_subcode_2025}, uses an ensemble of subcode decodings, where the subcodes jointly cover the original code. \Ac{SCED} does not impose any constraints onto the code design. %
In this work, we consider \ac{SCED} for polar codes by introducing a novel kind of polar \acp{PT} which generates suitable subcodes. 
By optimizing the selection of such \acp{PT}, we show that \ac{SCED} offers competitive performance compared to conventional \ac{SCL} decoding.

\section{Preliminaries}
\subsection{Polar Codes}
Polar codes leverage the $n$-fold \emph{polar transform} to synthesize virtual bit-channels given a set of identical \acp{BDMC} \cite{arikan_channel_2009}. These virtual bit-channels are \emph{polarized}, i.e., either reliable---suitable for uncoded information transmission---or unreliable with their input set to a known frozen value (typically $0$).
The indices of the reliable and unreliable bit-channels form the information set~$\mathcal{I}$ and the frozen set~$\mathcal{F}$, respectively. 
A polar code $\mathcal{C}(N, k)$ with block length $N=2^n,n\in \mathbb{N}$, is defined by the $n$-fold Hadamard matrix $\bm{G}_N=\begin{psmallmatrix}1 & 0 \\ 1 & 1 \end{psmallmatrix}^{\otimes n}$ along with its information set $\mathcal{I}\subseteq[N]$ of size $|\mathcal{I}|=k$, or equivalently, by its frozen set $\mathcal{F}=[N]\setminus\mathcal{I}$. Here, $(\cdot)^ {\otimes n}$ denotes the $n$-fold application of the Kronecker product and $[N]:=\{0,\ldots,N-1\}$.
The design of a polar code consists of carefully identifying the most reliable bit-channels, i.e., determining $\mathcal{I}$. 
To this end, a widely used approach for the \ac{BI-AWGN} channel is \ac{DE} \cite{mori_performance_2009}. 
For the application of polar codes in 5G, a reliability sequence is defined in \cite{bioglio_design_2020}.

Encoding begins by mapping the data word ${\bm{u}\in\mathbb{F}_2^k}$ onto the padded data word $\bm{u}_\mathrm{p}\in\mathbb{F}_2^N$, where $(\bm{u}_\mathrm{p})_\mathcal{I}=\bm{u}$ and  ${(\bm{u}_\mathrm{p})_\mathcal{F}=\bm{0}}$.
Hereby, $(\bm{u})_\mathcal{A}$ denotes the vector consisting of the elements from $\bm{u}$ with indices in $\mathcal{A}$. 
Then,  multiplication with $\bm{G}_N$ yields the codeword $\bm{x}\in\mathcal{C}(N, k)$, i.e, $\bm{x}=\bm{u}_\mathrm{p}\bm{G}_N$. 
Polar codes were originally proposed together with \ac{SC} decoding, which follows a sequential decoding order \cite{arikan_channel_2009}, decoding bits with lower indices first \cite{alamdar-yazdi_simplified_2011}. This order prevents \ac{SC} decoding from revising early decisions. 
To overcome this limitation, \ac{SCL} decoding allows the decoder to maintain a list of possible candidate sequences
and can revert to decisions that turn out to be more likely as decoding progresses~\cite{tal_list_2015}.

\subsection{Pre-transformed Polar Codes}

In \cite{tal_list_2015}, the authors observe that often, when \ac{SCL} decoding produces an incorrect codeword estimate, the correct estimate is in the list of candidates, but is discarded in favor of a more likely codeword.
Hence, they propose to pre-transform the data word $\bm{u}$ with a \ac{CRC} code before polar encoding and use the \ac{CRC} code as a decision genie when choosing the final estimate.
This significantly improves the error-correcting performance and reduces the undetected error rate compared to conventional \ac{SCL} decoding.

Other pre-transformed polar codes modify the value of frozen bits, e.g., polarization-adjusted convolutional \cite{arıkan2019pac} and row-merged polar codes \cite{rowmerged_zunker_2025}. Instead of selecting $0$ as the value of frozen bits, the value depends on information bits with lower indices and, thus, the frozen bits become \emph{dynamic} frozen bits.
Under \ac{SCL} decoding, this approach offers improved error correction performance due to the improved weight spectrum and the exploitation of the remaining capacity of the frozen bit-channels \cite{rowmerged_zunker_2025}.

\subsection{Subcodes and Subcode Ensemble Decoding (ScED)}
We define a subcode $\mathcal{C}_{T}(N,k)\subseteq \mathcal{C}(N,\kappa)$ of a polar code $\mathcal{C}(N,\kappa)$ to 
consist of $2^k$ codewords out of the $2^{\kappa}$ polar codewords, where $k\leq \kappa$ and $T$ is a suitable pre-transformation.
In contrast to \cite{trifonov_subcodes_2016,mandelbaum_subcode_2025}, we do not require that $\mathcal{C}_{T}(N,k)$ is a linear subspace, and we generalize the definition of \mbox{pre-transformation} in \cite{trifonov_subcodes_2016} by allowing for an offset.

\ac{SCED}, introduced in \cite{mandelbaum_subcode_2025}, consists of $M$ parallel paths and a \ac{ML}-in-the-list decision.
Assume the transmission of an arbitrary codeword $\bm{x}$ of a linear block code $\mathcal{C}(N,\kappa)$ over a channel with channel output alphabet $\mathcal{Y}$ where $\bm{y}\in\mathcal{Y}^N$ is observed at the channel output.
Then, each path of \ac{SCED} performs decoding on a subcode $\mathcal{C}_i\subseteq\mathcal{C}$, where $i\in[M]$, yielding $M$ possibly different estimates $\hat{\bm{x}}_i$.
If $\hat{\bm{x}}_i\in \mathcal{C}_i$, the $i$th subcode decoding converged successfully, which is not guaranteed for
\ac{BP} decoding in \cite{mandelbaum_subcode_2025}.
Given the set $\mathcal{L}=\{\hat{\bm{x}}_i\}_{i\in[M]}$ containing the estimates of the $M$ different paths, the \ac{ML}-in-the-list decision selects the estimate that belongs to the code $\mathcal{C}$ and that maximizes the log-likelihood function $L(\bm{y}|\bm{x})$.
If no subcode decoding yields a valid codeword, the estimate maximizing $L(\bm{y}|\bm{x})$ is chosen to minimize the \ac{BER}.
There might exist codewords $\bm{x}\in \mathcal{C}$ with $\bm{x}\not\in \mathcal{C}_i\,\forall i\in[M]$. To avoid this, \ac{SCED} employs a carefully chosen ensemble of $M$ subcodes such that
\begin{equation}\label{eq:union_of_subcodes}
    \bigcup_{i\in[M]} \mathcal{C}_i = \mathcal{C},
\end{equation}
ensuring that all codewords are contained in at least one subcode \cite{mandelbaum_subcode_2025}.

\section{Subcode Ensemble Decoding of Polar Codes}   
In this section, we introduce \ac{SCED} for polar codes. Linear subcodes of polar codes can be represented using suitable \acp{PT}\cite{trifonov_subcodes_2016}.
To provide a unified framework, we classify three types of \acp{PT}, as shown in Tab.~\ref{tab:classified_pretransformations}, based on their use in encoding and decoding. 
First, in Sec.~\ref{sec:generalized_framework}, we describe \PTA~and \PTB~which incorporate well-known classes of pre-transformed polar codes.
Then, in Sec.~\ref{sec:type_c}, we introduce a novel third type of \acp{PT}, denoted as \PTC, which are considered solely during decoding to enable \ac{SCED}.

\begin{table}[t]
    \centering
        \caption{Classification of the three types of pre-transformations.}
    \begin{tabular}{cccc}
        \toprule
            Type~of PT & Encoding & Decoding & Decision-Genie \\
            \midrule
         \textcolor{kit-red100}{\textbf{PT-A}} & \checkmark & $\times$ & \checkmark \\ 
         \textcolor{KITpurple}{\textbf{PT-B}} & \checkmark & \checkmark & $\times$  \\
         \textcolor{kit-blue100}{\textbf{PT-C}} & $\times$ & \checkmark & $\times$ \\ 
         \bottomrule
        \end{tabular}
    \label{tab:classified_pretransformations}
\end{table}

\subsection{Generalized Framework for Pre-transformations}
\label{sec:generalized_framework}
In our framework, we consider \acp{PT} which are affine transformations $T:\mathbb{F}_2^N\rightarrow\mathbb{F}_2^N$ mapping the padded data word ${\bm{u}_\mathrm{p} \in \mathbb{F}_2^N}$ onto another padded data word ${\tilde{\bm{u}}_\mathrm{p} \in \mathbb{F}_2^N}$: %
\begin{equation}
            T(\bm{u}_\mathrm{p}) = \bm{u}_\mathrm{p}\bm{A} + \bm{b} = \tilde{\bm{u}}_\mathrm{p}, \qquad \bm{A} \in \mathbb{F}_2^{N\times N},\quad \bm{b} \in \mathbb{F}_2^{N}.
            \label{eq:pt}
\end{equation}
Hereby, following \cite{rowmerged_zunker_2025,trifonov_subcodes_2016}, the matrix $\bm{A}$ is upper triangular to ensure that dynamic frozen bits are only dependent on previously decoded information bits.
In contrast to \cite{trifonov_subcodes_2016}, (\ref{eq:pt}) allows for an offset $\bm{b}$ such that the resulting subcodes are potentially affine subspaces of the polar code $\mathcal{C}(N, \kappa)$. 
As illustrated in Fig.~\ref{fig:entire_graph_encoding}, applying a \ac{PT} to a polar code results in a \emph{pre-transformed polar code} $\mathcal{C}_T(N, k)$ with $k \leq \kappa$. 
Furthermore, we denote the graph that represents $\mathcal{C}_T(N, k)$ and which consists of the graphical representations of the \ac{PT} and the polar code as the \emph{joint graph}.

Fig.~\ref{fig:jointgraph} depicts an exemplary joint graph of a pre-transformed polar code $\mathcal{C}_T(8,3)$ where the information bits and frozen bits of the polar code $\mathcal{C}(8,4)$ are colored in \textcolor{KITorange}{orange} and black, respectively.
We decompose the bit indices in $\tilde{\bm{u}}_\mathrm{p}$ as $[N]=\mathcal{T}\cup\mathcal{T}^c$.
Herein, $\mathcal{T}^c$ denotes the indices of bits that do not undergo pre-transformation, i.e., $(\tilde{\bm{u}}_\mathrm{p})_i=(\bm{u}_\mathrm{p})_i, i\in\mathcal{T}^c,$ and are depicted as circles 
(\blackcirc\,/\,\orangecirc)
in Fig.~\ref{fig:jointgraph}, whereas indices in $\mathcal{T}$ that are subject to an affine mapping are called \emph{target bits} and are depicted as triangles
(\blacktri\,/\,\orangetri). 
The bits in $\bm{u}_\mathrm{p}$ that contribute to a certain target bit are denoted \emph{origin bits}. 
We assume that target bits do not depend on an origin bit with the same index and, thus, $A_{t,t}=0, \, \forall t \in \mathcal{T}$. 
The number of target bits for a given \ac{PT} is referred to as the \emph{depth} $d_\mathrm{p}=|\mathcal{T}|$ of the \ac{PT}.
Note that $k=\kappa-d_\mathrm{p}$. 

\begin{figure}[!t]
    \centering
    \includegraphics[scale=1]{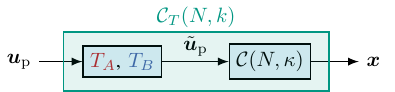}
    \caption{Pre-transformed polar code formed by employing \PTA~and \PTB.}
    \label{fig:entire_graph_encoding}
\end{figure}

\begin{figure}[t]
    \centering
    \includegraphics[scale=1]{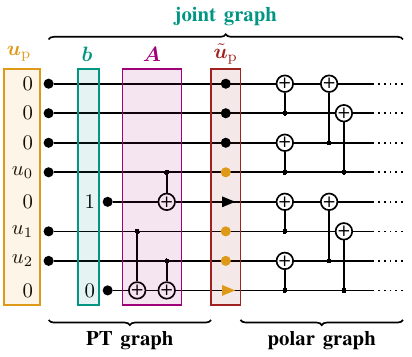}
    \caption{Exemplary joint encoding graph for the polar code $\mathcal{C}(8, 4)$ with information set ${\mathcal{I} = \{3, 5, 6, 7\}}$ and \ac{PT} with $\mathcal{T}=\{4, 7\}$ from (\ref{eq:PTexample}), resulting in the pre-transformed polar code $\mathcal{C}_T(8,3)$. Information bits of $\mathcal{C}$ are colored in \textcolor{KITorange}{orange}. Target bits $t \in \mathcal{T}$ of the \ac{PT} are drawn as triangles (\protect\blacktri\,/\,\protect\orangetri).}
    \label{fig:jointgraph}
\end{figure}

The first kind of \acp{PT}, \PTA~with target bits $\mathcal{T}_\mathrm{A}$, are considered in encoding, while their target bits with $\mathcal{T}_\mathrm{A} \subseteq \mathcal{I}$ are treated as conventional information bits upon decoding. 
Thus, the decoding operates on $\mathcal{C}$, which is a supercode of $\mathcal{C}_T$, i.e., $\mathcal{C}_T \subseteq \mathcal{C}$. 
This effect is leveraged in genie-aided decisions to filter out invalid codewords from the list in order to improve decoding performance and lower the undetected error rate. 
For instance, \ac{CRC}-aided polar codes can be represented in our framework by a polar code with a \PTA. 

In contrast, \PTB, which incorporate row-merged polar codes \cite{rowmerged_zunker_2025}, are considered in both encoding and decoding. 
To decode on the joint graph, the decoder considers the \ac{PT} as the decoding process reaches the target bits \cite{rowmerged_zunker_2025,arıkan2019pac}.
Thus, the target bits have to be included in the frozen set, i.e., $\mathcal{T}_\mathrm{B}\subseteq\mathcal{F}$ and become \emph{dynamic} frozen by applying the \ac{PT}.

For illustration, Fig.~\ref{fig:jointgraph} depicts an exemplary joint graph of a polar code $\mathcal{C}(8, 4)$ with ${\mathcal{I} = \{3, 5, 6, 7\}}$ together with a \ac{PT} with $b_4=1$ and the other offsets being $0$, resulting in
\begin{align}
\begin{split}
    \tilde{\bm{u}}_\mathrm{p} &= 
    \left(\begin{array}{cccccccc}
         0 & 0 & 0 & u_0 & u_0 + 1 & u_1 & u_2 & u_1 + u_2
    \end{array} \right).
    \label{eq:PTexample}
    \end{split}
\end{align}

Here, $\tilde{u}_{\mathrm{p},4}$ is a target bit of a \PTB~(\blacktri) as $4\notin\mathcal{I}$, whereas the \ac{PT} modifying $\tilde{u}_{\mathrm{p},7}$ is a \PTA~(\orangetri) since $7\in \mathcal{I}$.

\subsection{Applying Pre-transformations in ScED}
\label{sec:type_c}
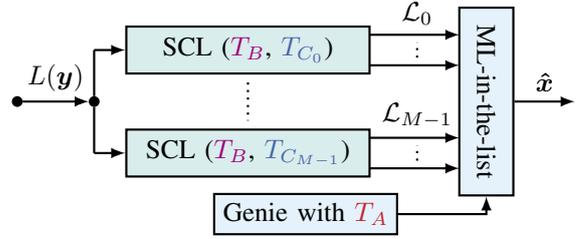
\begin{figure}[t]
    \centering
    \resizebox{.88\columnwidth}{!}{
    \begin{tikzpicture}[thick]
    \tikzstyle{rect} = [rectangle, fill=kit-green15, draw=black, minimum height = 17.5pt, minimum width = 90pt]
    \tikzstyle{smallcircle} = [circle, fill=black, minimum size=4pt, inner sep=0pt, line width=0]
    \tikzstyle{decrect} = [rectangle, draw, minimum height=20pt, rotate=-90, fill=kit-cyanblue15]
    \tikzstyle{channel} = [rectangle, draw, fill=kit-cyanblue15]

    \newcommand\yspacing{1.35}
    \newcommand\gapfac{2.5}
    \newcommand\mlxpos{2.75}

    \node[smallcircle] (inp) at (-3, -\yspacing / 2) {};
    
    \coordinate (d0+1) at (\mlxpos, 0);
    \coordinate (d2+1) at (\mlxpos, -\yspacing);

    \coordinate (d0-1) at (-2, 0);
    \coordinate (d2-1) at (-2, -\yspacing);

    \node[smallcircle] (d-1) at (-2, -\yspacing / 2) {};

    \node[decrect, text centered, minimum width=70pt] (decide) at (\mlxpos + 0.35, -\yspacing / 2) {ML-in-the-list};

    \node[rectangle, fill=kit-cyanblue15, draw] (genie) at (\mlxpos - 2, -\yspacing - 0.8) {Genie with \textcolor{kit-red100}{$T_A$}};

    \coordinate (out) at (\mlxpos + 1.5, -\yspacing / 2);

    \coordinate (d0start1) at (0, 0.2);
    \coordinate (d0start2) at (0, -0.2);
    \coordinate (d0+1end1) at (\mlxpos, 0.2);
    \coordinate (d0+1end2) at (\mlxpos, -0.2);
    
    \coordinate (d2start1) at (0, 0.2 - \yspacing);
    \coordinate (d2start2) at (0, -0.2 - \yspacing);
    \coordinate (d2+1end1) at (\mlxpos, 0.2 - \yspacing);
    \coordinate (d2+1end2) at (\mlxpos, -0.2 - \yspacing);
    
    \draw[-latex] (d0start1) -- (d0+1end1) node[pos=0.8, above] {$\mathcal{L}_0$};
    \draw[-latex] (d0start2) -- (d0+1end2);

    \draw[-latex] (d2start1) -- (d2+1end1) node[pos=0.8, above] {$\mathcal{L}_{M-1}$};
    \draw[-latex] (d2start2) -- (d2+1end2);   

    \draw[dotted] (\mlxpos - 0.55, -0.1) -- (\mlxpos - 0.55, 0.1);
    \draw[dotted] (\mlxpos - 0.55, -0.1 - \yspacing) -- (\mlxpos - 0.55, 0.1 - \yspacing);

    \node[rect] (d0) at (0, 0) {SCL (\textcolor{kit-purple100}{$T_B$}, \textcolor{KITblue}{$T_{C_0}$})};
    \node[rect] (d2) at (0, -\yspacing) {SCL (\textcolor{kit-purple100}{$T_B$}, \textcolor{KITblue}{$T_{C_{M-1}}$})};

    \draw[-latex] (inp) -- (d-1) node[pos=0.5, above] {$L(\bm{y})$};

    \draw[-latex] (decide) -- (out) node[midway, above] {$\bm{\hat{x}}$};

    \draw[-latex] (genie) -- (\mlxpos + 0.35, -\yspacing - 0.8) -- (decide);

    \draw[-latex] (d0-1) -- (d0);
    \draw[-latex] (d2-1) -- (d2);

    \draw[-] (d-1) -- (d0-1);
    \draw[-] (d-1) -- (d2-1);

    \coordinate (help00) at (0, -\yspacing / 2 + 0.25);
    \coordinate (help01) at (0, -\yspacing / 2 - 0.25);
    \draw[dotted] (help00) -- (help01);

\end{tikzpicture}
    }
    \caption{\ac{SCED} employing \PTB~and $M$ \PTC s. \PTA s~are considered for the genie-aided decision.}
    \label{fig:entire_graph_decoding}
\end{figure}

Let $\mathcal{C}(N,\kappa)$ denote the polar code that is used for transmission. 
The goal of a \PTC~is to constitute a joint graph that is used in decoding but \emph{not} used for encoding. 
Thus, while trying to decode a codeword $\bm{x} \in \mathcal{C}$, the decoder operates on the joint graph associated with a subcode $\mathcal{C}_T(N, k)$ of $\mathcal{C}(N,\kappa)$ enabling \ac{SCED} for polar codes.
To this end, similar to \PTB, target bits have indices ${\mathcal{T}_C\subseteq\mathcal{I}}$ and, consequently, become dynamic frozen bits by decoding on the joint graph.
Note that if $\bm{x} \in \mathcal{C}$ and $\bm{x} \notin \mathcal{C}_T(N, k)$, the subcode decoding, i.e., the decoding on the joint graph, cannot yield the correct estimate. 
Therefore, following \cite{mandelbaum_subcode_2025}, we select $M$ different \PTC s such that the respective subcodes fulfill~(\ref{eq:union_of_subcodes})~\cite{mandelbaum_subcode_2025}.
 Then, as illustrated in Fig.~\ref{fig:entire_graph_decoding}, we combine these $M$ subcode decodings to constitute an \ac{SCED} for polar codes.

Next, we introduce Theorem~\ref{theorem:samedecision}, where we investigate the decoding behavior of \ac{SC}-based \ac{SCED} compared to stand-alone \ac{SC} decoding. The analysis of \ac{SCL} decoding will be part of future work.
\begin{theorem}
    Consider a polar code $\mathcal{C}(N, k)$ with \ac{SC} decoding $\mathrm{SC}: \mathcal{Y}^N \to \mathcal{C}$. Consider a \PTC~such that $\mathcal{C}_T \subseteq \mathcal{C}$ and let $\mathrm{SC}_T: \mathcal{Y}^N \to \mathcal{C}_T$ denote the decoder on the joint graph. Assume that codeword $\bm{x} \in \mathcal{C}_T$ is transmitted over a channel and $\bm{y} \in \mathcal{Y}^N$ is received at the output. Then,
    \begin{equation}\label{eq:sced_outperforms_sc}
        \mathrm{SC}(\bm{y})=\bm{x} \implies \mathrm{SC}_T(\bm{y}) = \bm{x},
    \end{equation}
    i.e., if a stand-alone \ac{SC} decoding decodes correctly, a subcode \ac{SC} decoding will make the same decision, provided the codeword is included in its corresponding subcode.
    \label{theorem:samedecision}
\end{theorem}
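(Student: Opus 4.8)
The plan is to prove~(\ref{eq:sced_outperforms_sc}) by a direct induction over the successive-cancellation decoding order, comparing the internal decisions of $\mathrm{SC}$ and $\mathrm{SC}_T$ one index at a time. First I would bring both decoders into a common form in the $\tilde{\bm{u}}_\mathrm{p}$-domain, i.e., the domain of the input to the polar transform. Since $\bm{G}_N$ is invertible over $\mathbb{F}_2$, each codeword of $\mathcal{C}$ has a unique preimage, and decoding $\mathcal{C}$ with $\mathrm{SC}$ amounts to producing, for $i=0,1,\dots,N-1$, a decision vector $\hat{\bm{v}}$ with $\hat{\bm{x}}=\hat{\bm{v}}\bm{G}_N$, where $\hat{v}_i=0$ for $i\in\mathcal{F}$ and $\hat{v}_i=\mathrm{hd}\bigl(L_i(\bm{y},\hat{v}_0,\dots,\hat{v}_{i-1})\bigr)$ for $i\in\mathcal{I}$; here $\mathrm{hd}$ is the hard-decision rule ($\mathrm{hd}(L)=0$ if $L\ge 0$ and $1$ otherwise) and $L_i$ is the log-likelihood ratio delivered by the polar-transform recursion, i.e., a fixed function of $\bm{y}$ and the past decisions. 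Decoding on the joint graph of the \PTC~proceeds in the very same order and with the very same functions $L_i$---the polar part of the joint graph is unchanged---except that each target index $t\in\mathcal{T}_C$ is treated as a dynamic frozen position, so $\mathrm{SC}_T$ sets $\hat{v}_t=g_t(\hat{v}_0,\dots,\hat{v}_{t-1})$, where $g_t$ is the affine function of the earlier components read off from~(\ref{eq:pt}) (well defined because $\bm{A}$ is upper triangular with $A_{t,t}=0$, so every origin bit of $t$ has a smaller index). With this notation, $\mathcal{C}_T$ is precisely the image under $\bm{G}_N$ of the set $\mathcal{V}_T$ of vectors $\bm{v}$ satisfying $v_i=0$ for $i\in\mathcal{F}$ and $v_t=g_t(v_0,\dots,v_{t-1})$ for $t\in\mathcal{T}_C$.

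Next I would fix the transmitted codeword and run the induction. Let $\bm{v}^\star:=\bm{x}\bm{G}_N^{-1}$ be the preimage of $\bm{x}\in\mathcal{C}_T$; by the above, $\bm{v}^\star\in\mathcal{V}_T$, i.e., $v^\star_i=0$ on $\mathcal{F}$ and $v^\star_t=g_t(v^\star_0,\dots,v^\star_{t-1})$ on $\mathcal{T}_C$. The hypothesis $\mathrm{SC}(\bm{y})=\bm{x}$ gives $\hat{\bm{v}}\bm{G}_N=\bm{x}$, hence $\hat{\bm{v}}=\bm{v}^\star$ by invertibility. I then claim that the decisions $\hat{\bm{v}}^{(T)}$ produced by $\mathrm{SC}_T$ also equal $\bm{v}^\star$, proved by induction on $i$: assuming $\hat{v}^{(T)}_j=v^\star_j$ for all $j<i$, distinguish three cases. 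If $i\in\mathcal{F}$, both sides equal $0$. If $i\in\mathcal{T}_C$, then $\hat{v}^{(T)}_i=g_i(\hat{v}^{(T)}_0,\dots,\hat{v}^{(T)}_{i-1})=g_i(v^\star_0,\dots,v^\star_{i-1})=v^\star_i$ by the induction hypothesis and the constraint defining $\mathcal{V}_T$. If $i\in\mathcal{I}\setminus\mathcal{T}_C$, then $\mathrm{SC}_T$ makes a genuine hard decision on the same LLR $L_i$ that $\mathrm{SC}$ uses at index $i$, evaluated on the same past decisions by the induction hypothesis, so $\hat{v}^{(T)}_i=\mathrm{hd}\bigl(L_i(\bm{y},v^\star_0,\dots,v^\star_{i-1})\bigr)=\hat{v}_i=v^\star_i$, the last equality being $\hat{\bm{v}}=\bm{v}^\star$. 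Therefore $\hat{\bm{v}}^{(T)}=\bm{v}^\star$ and $\mathrm{SC}_T(\bm{y})=\hat{\bm{v}}^{(T)}\bm{G}_N=\bm{x}$.

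The step I expect to require the most care---the real crux---is the reformulation itself: establishing that $\mathrm{SC}$ and $\mathrm{SC}_T$ share the same decoding schedule and evaluate the \emph{same} LLR function $L_i$ at every non-target information index, so that the \PTC~changes only \emph{what is decided at} target positions and never the values flowing through the polar stage. This is exactly what makes the case $i\in\mathcal{I}\setminus\mathcal{T}_C$ go through, and it rests on the \PTC~being a pure front-end transformation on the polar input together with the fact that, at such an index, $L_i$ depends on the received word and the past decisions alone. A minor additional point is to verify that the $g_t$ are well-defined affine functions of $v_0,\dots,v_{t-1}$; this is immediate from $\bm{A}$ being upper triangular with zero diagonal on $\mathcal{T}$, since the resulting dependency is strictly triangular and hence resolvable in the decoding order. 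Once these are in place, the induction above is routine bookkeeping.
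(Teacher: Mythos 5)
Your proof is correct and follows essentially the same route as the paper's: both rest on the observation that the joint graph leaves the polar stage (and hence the LLR functions at non-target indices) unchanged, so identical past decisions propagate forward, while at target bits the dynamic-frozen constraint reproduces the transmitted value precisely because $\bm{x}\in\mathcal{C}_T$. Your explicit induction over the decoding order merely formalizes, and handles arbitrary depth $d_\mathrm{p}$ in one pass, what the paper argues informally for $d_\mathrm{p}=1$ and then extends by treating each target bit independently.
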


\begin{proof}
    Consider an unmodified \ac{SC} decoder $\mathrm{SC}(\bm{y})$ and an \ac{SC}-based decoding on the joint graph of a depth ${d_\mathrm{p}=1}$ \ac{PT}, denoted as $\mathrm{SC}_T(\bm{y})$, which yields the subcode $\mathcal{C}_T$.
    Assume that $\bm{x} \in \mathcal{C}_T$ is transmitted over a channel and $\bm{y} \in \mathcal{Y}^N$ is received at the output.
    The decoding process of $\mathrm{SC}_T$ and SC is identical until reaching the target bit of the \ac{PT}.
    Assuming $\mathrm{SC}(\bm{y})=\bm{x}$ and because $\bm{x}\in \mathcal{C}_T$, both decodings will make the same decision at that bit.    
    As subsequent decisions solely depend on previous decisions and the received values $\bm{y}$, both decodings will continue to make the same decisions and, thus, will return the same estimate, i.e., $\mathrm{SC}_T(\bm{y}) = \mathrm{SC}(\bm{y}) = \bm{x}$.
    By considering each target bit independently, the same argument holds for an arbitrary \PTC~with depth $d_\mathrm{p} > 1$ if $\bm{x}\in \mathcal{C}_T$.
\end{proof}
Note that ${\mathrm{SC}_T(\bm{y})=\bm{x} \notimplies \mathrm{SC}(\bm{y}) = \bm{x}}$. Hence, together with Theorem~\ref{theorem:samedecision}, this motivates that the average performance of \ac{SC}-based \ac{SCED} is at least as good as the performance of \ac{SC} decoding when ensuring a covering of the original code $\mathcal{C}$ as proposed in \cite{mandelbaum_subcode_2025}.
Furthermore, the performance cannot be improved by incorporating a path that performs \ac{SC} decoding on the original polar graph which is in contrast to \ac{BP}-based \ac{SCED}, where the ensemble decoding still benefits from a path that employs decoding on the original graph \cite{mandelbaum_subcode_2025}.

\section{Selection of Pre-transformations for ScED}
The selection of suitable \PTC s for \ac{SCED} plays a crucial role in its decoding performance.
In this section, we outline two steps to identify ensembles with promising error correction capabilities.
First, we analyze the influence of the parameters of \PTC s on the performance of the joint graph decoding. 
Using those insights, we generate a large set of candidate subcode decodings from which we select a small subset of
\PTC~to form an \ac{SCED}.
To achieve this, we simulate the transmission of codewords over a 
\ac{BI-AWGN} channel collecting \acl{LLR} patterns that result in a frame error when decoded using stand-alone \ac{SC} decoding or SCL-$L$ decoding. We refer to these patterns as \acp{URP}-SC or \acp{URP}-SCL-$L$, respectively.

\subsection{Parameter Analysis}
We consider a $\mathcal{C}(256, 128)$ polar code with an information set designed using \ac{DE} and repeat the following process $2000$ times: we sample $100$ \acp{URP}-SC at $E_\mathrm{b}/N_0=2.5\,\mathrm{dB}$ and select $192$ random \PTC s.
Then, for each \PTC, we track the depth and target bits and evaluate the number of \acp{URP}-SC that the respective subcode can decode correctly.

In Fig.~\ref{fig:performance_analysis_target_bits} we depict the ratio of decoded \acp{URP} for different target bit indices. 
Hereby, to assess the influence of each potential target bit individually, we solely consider \acp{PT} of depth $d_\mathrm{p}=1$.  For each target bit, we sample origin bits randomly from the bits that are decoded earlier in \ac{SC} decoding and offsets according to $\mathrm{Bernoulli}\mleft(\frac{1}{2}\mright)$.
We sort the target bits according to the reliability of the corresponding bit-channels in ascending order.
Target bits associated with unreliable bit-channels typically yield a higher ratio of decoded \acp{URP} than those corresponding to more reliable bit-channels.
This effect arises because target bits of \PTC s become dynamic frozen bits and thus no longer negatively impact decoding performance.
However, some outliers indicate that additional effects influence the decoding performance of subcode decodings.

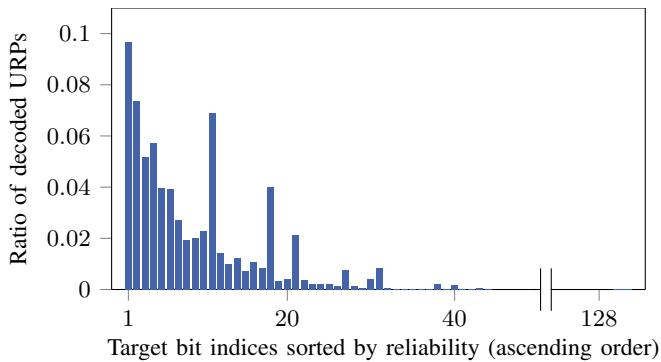
\begin{figure}[t]
    \centering
    \pgfplotsset{
    every non boxed y axis/.style={} 
}

\begin{tikzpicture}

\definecolor{darkgray176}{RGB}{176,176,176}

\begin{groupplot}[
    group style={
        group name=my fancy plots,
        group size=2 by 1,
        yticklabels at=edge left,
        horizontal sep=0pt
    },
    height = \columnwidth * 0.6,
    ymin=0, ymax=0.11,
    ytick={0, 0.02, 0.04, 0.06, 0.08, 0.1},
    xticklabel style = {font=\small},
    yticklabel style = {font=\small},   
    separate axis lines,
    axis x line*=bottom
]

\nextgroupplot[xmin=-1, xmax=49,
               xtick={1, 20, 40},
               axis y line=left, 
               ylabel={\small{Ratio of decoded URPs}},
               width=\columnwidth*0.8,
               scaled ticks=false,
               yticklabel style={/pgf/number format/fixed},
                tick align=outside,
                tick pos=left,
                xlabel={\small{Target bit indices sorted by reliability (ascending order)}},
                x label style={at={(axis description cs:0.65,0)},anchor=north},]
\draw[draw=none,fill=KITblue] (axis cs:0.6,0) rectangle (axis cs:1.4,0.0965549597855228);
\draw[draw=none,fill=KITblue] (axis cs:1.6,0) rectangle (axis cs:2.4,0.0735161507402423);
\draw[draw=none,fill=KITblue] (axis cs:2.6,0) rectangle (axis cs:3.4,0.051745867768595);
\draw[draw=none,fill=KITblue] (axis cs:3.6,0) rectangle (axis cs:4.4,0.0572246401071309);
\draw[draw=none,fill=KITblue] (axis cs:4.6,0) rectangle (axis cs:5.4,0.0397513623978202);
\draw[draw=none,fill=KITblue] (axis cs:5.6,0) rectangle (axis cs:6.4,0.0391232148787778);
\draw[draw=none,fill=KITblue] (axis cs:6.6,0) rectangle (axis cs:7.4,0.0273361870266711);
\draw[draw=none,fill=KITblue] (axis cs:7.6,0) rectangle (axis cs:8.4,0.0194613501224204);
\draw[draw=none,fill=KITblue] (axis cs:8.6,0) rectangle (axis cs:9.4,0.0199373259052925);
\draw[draw=none,fill=KITblue] (axis cs:9.6,0) rectangle (axis cs:10.4,0.0227885906040268);
\draw[draw=none,fill=KITblue] (axis cs:10.6,0) rectangle (axis cs:11.4,0.0691523678792038);
\draw[draw=none,fill=KITblue] (axis cs:11.6,0) rectangle (axis cs:12.4,0.0143072882468812);
\draw[draw=none,fill=KITblue] (axis cs:12.6,0) rectangle (axis cs:13.4,0.00988239247311828);
\draw[draw=none,fill=KITblue] (axis cs:13.6,0) rectangle (axis cs:14.4,0.0124110671936759);
\draw[draw=none,fill=KITblue] (axis cs:14.6,0) rectangle (axis cs:15.4,0.0074169741697417);
\draw[draw=none,fill=KITblue] (axis cs:15.6,0) rectangle (axis cs:16.4,0.0107913425769361);
\draw[draw=none,fill=KITblue] (axis cs:16.6,0) rectangle (axis cs:17.4,0.00831924577373212);
\draw[draw=none,fill=KITblue] (axis cs:17.6,0) rectangle (axis cs:18.4,0.0399545602077248);
\draw[draw=none,fill=KITblue] (axis cs:18.6,0) rectangle (axis cs:19.4,0.00317316994517897);
\draw[draw=none,fill=KITblue] (axis cs:19.6,0) rectangle (axis cs:20.4,0.00411107348459194);
\draw[draw=none,fill=KITblue] (axis cs:20.6,0) rectangle (axis cs:21.4,0.0211689620423245);
\draw[draw=none,fill=KITblue] (axis cs:21.6,0) rectangle (axis cs:22.4,0.00383593490534706);
\draw[draw=none,fill=KITblue] (axis cs:22.6,0) rectangle (axis cs:23.4,0.00223446893787575);
\draw[draw=none,fill=KITblue] (axis cs:23.6,0) rectangle (axis cs:24.4,0.00222473361317404);
\draw[draw=none,fill=KITblue] (axis cs:24.6,0) rectangle (axis cs:25.4,0.00205153949129853);
\draw[draw=none,fill=KITblue] (axis cs:25.6,0) rectangle (axis cs:26.4,0.00138259441707718);
\draw[draw=none,fill=KITblue] (axis cs:26.6,0) rectangle (axis cs:27.4,0.00775787728026534);
\draw[draw=none,fill=KITblue] (axis cs:27.6,0) rectangle (axis cs:28.4,0.00126417038818276);
\draw[draw=none,fill=KITblue] (axis cs:28.6,0) rectangle (axis cs:29.4,0.00054637436762226);
\draw[draw=none,fill=KITblue] (axis cs:29.6,0) rectangle (axis cs:30.4,0.00408598510494245);
\draw[draw=none,fill=KITblue] (axis cs:30.6,0) rectangle (axis cs:31.4,0.00821332436069987);
\draw[draw=none,fill=KITblue] (axis cs:31.6,0) rectangle (axis cs:32.4,0.000554106910039113);
\draw[draw=none,fill=KITblue] (axis cs:32.6,0) rectangle (axis cs:33.4,0.000330771800868694);
\draw[draw=none,fill=KITblue] (axis cs:33.6,0) rectangle (axis cs:34.4,0.000240131578947368);
\draw[draw=none,fill=KITblue] (axis cs:34.6,0) rectangle (axis cs:35.4,0.00034248788368336);
\draw[draw=none,fill=KITblue] (axis cs:35.6,0) rectangle (axis cs:36.4,8.70125684821141e-05);
\draw[draw=none,fill=KITblue] (axis cs:36.6,0) rectangle (axis cs:37.4,3.30797221303341e-06);
\draw[draw=none,fill=KITblue] (axis cs:37.6,0) rectangle (axis cs:38.4,0.00233857477417196);
\draw[draw=none,fill=KITblue] (axis cs:38.6,0) rectangle (axis cs:39.4,4.96688741721854e-05);
\draw[draw=none,fill=KITblue] (axis cs:39.6,0) rectangle (axis cs:40.4,0.00183759301197024);
\draw[draw=none,fill=KITblue] (axis cs:40.6,0) rectangle (axis cs:41.4,0);
\draw[draw=none,fill=KITblue] (axis cs:41.6,0) rectangle (axis cs:42.4,1.69606512890095e-05);
\draw[draw=none,fill=KITblue] (axis cs:42.6,0) rectangle (axis cs:43.4,0.000777853260869565);
\draw[draw=none,fill=KITblue] (axis cs:43.6,0) rectangle (axis cs:44.4,0);

\path[-] (rel axis cs:0,1)     coordinate(topstart)
         --(rel axis cs:1,1)   coordinate(topstop);
\draw(topstart) -- (topstop);

\nextgroupplot[xmin=125, xmax=130,
               xtick={128},
               ytick={},
               xtick pos=bottom,
               ytick style={color=white},
               tick align=outside,
               xticklabel style = {font=\small},
               yticklabel style = {font=\small},
               axis y line=right, 
               axis x discontinuity=parallel,
               width=\columnwidth*0.35]

\draw[draw=none,fill=KITblue] (axis cs:128.6,0) rectangle (axis cs:129.4,0);
\path[-] (rel axis cs:0,1)     coordinate(topstart)
         --(rel axis cs:1,1)   coordinate(topstop);
\draw(topstart) -- (topstop);
    
\end{groupplot}

\end{tikzpicture}
    \vspace*{-1.8em}
    \caption{Ratio of decoded \acp{URP} over the indices of target bits of \acp{PT} sorted by the reliability of the corresponding bit-channels for the \ac{DE} polar code $\mathcal{C}(256,128)$ with SC decoding in ascending order. 
    Analysis based on ${2000 \cdot 100}$ \acp{URP} using $192$ newly sampled subcode decoders per $100$ \acp{URP}.}
    \label{fig:performance_analysis_target_bits}
\end{figure}

To investigate the impact of the depth $d_\mathrm{p}$ of a \PTC, we 
extend our previous analysis and
sample $d_\mathrm{p}$ uniformly from $\{1,\dots,15\}$. 
On the one hand, a larger depth increases the number of dynamic frozen bits, potentially improving performance. 
On the other hand, the number of codewords of the polar code that are elements of the subcode decreases exponentially with increasing $d_\mathrm{p}$, as $|\mathcal{C}_T|=2^{k-d_\mathrm{p}}$.
Hence, larger depths require more paths in order to possibly fulfill (\ref{eq:union_of_subcodes}) resulting in a trade-off between depth and ensemble size.
This is supported by the results in Fig.~\ref{fig:performance_analysis_depth_de_sc}, which depicts the ratio of decoded \acp{URP} over the depth $d_\mathrm{p}$ of \PTC. For $d_\mathrm{p}>2$, the ratio of decoded \acp{URP} decreases with increasing depth.
However, Fig.~\ref{fig:performance_analysis_depth_de_sc} shows that \acp{PT} with depth $d_\mathrm{p}=2$ tend to outperform those with depth $d_\mathrm{p}=1$, despite the fact that their respective subcodes cover only half as many of the original codewords.
This suggests that setting $d_\mathrm{p}=2$ should yield an efficient ensemble for \ac{SCED}.

\begin{figure}[t]
    \centering
    \begin{tikzpicture}

\definecolor{darkgray176}{RGB}{176,176,176}

\begin{axis}[
width=\columnwidth,
height=\columnwidth * 0.6,
tick align=outside,
tick pos=left,
x grid style={darkgray176},
xlabel={\small{Depth $d_\mathrm{p}$ of pretransformation}},
xmin=-0.5, xmax=15.34,
xtick style={color=black},
y grid style={darkgray176},
xticklabel style = {font=\small},
yticklabel style = {font=\small},
ylabel={\small{Ratio of decoded URPs}},
ytick={0.01, 0.02, 0.03, 0.04},
xtick={1,3,5,7,9,11,13},
ymin=0, ymax=0.0471432279296447,
ytick style={color=black},
scaled ticks=false,
yticklabel style={/pgf/number format/fixed}
]
\draw[draw=none,fill=KITblue] (axis cs:-0.4,0) rectangle (axis cs:0.4,0);
\draw[draw=none,fill=KITblue] (axis cs:0.6,0) rectangle (axis cs:1.4,0.042160878482857);
\draw[draw=none,fill=KITblue] (axis cs:1.6,0) rectangle (axis cs:2.4,0.0448983123139473);
\draw[draw=none,fill=KITblue] (axis cs:2.6,0) rectangle (axis cs:3.4,0.0316948664057566);
\draw[draw=none,fill=KITblue] (axis cs:3.6,0) rectangle (axis cs:4.4,0.0217675461267423);
\draw[draw=none,fill=KITblue] (axis cs:4.6,0) rectangle (axis cs:5.4,0.0142478483738677);
\draw[draw=none,fill=KITblue] (axis cs:5.6,0) rectangle (axis cs:6.4,0.00872900587102697);
\draw[draw=none,fill=KITblue] (axis cs:6.6,0) rectangle (axis cs:7.4,0.00548586137234567);
\draw[draw=none,fill=KITblue] (axis cs:7.6,0) rectangle (axis cs:8.4,0.00301227658521586);
\draw[draw=none,fill=KITblue] (axis cs:8.6,0) rectangle (axis cs:9.4,0.00175899362640342);
\draw[draw=none,fill=KITblue] (axis cs:9.6,0) rectangle (axis cs:10.4,0.0011323521469972);
\draw[draw=none,fill=KITblue] (axis cs:10.6,0) rectangle (axis cs:11.4,0.000593660348911154);
\draw[draw=none,fill=KITblue] (axis cs:11.6,0) rectangle (axis cs:12.4,0.000230867913465449);
\draw[draw=none,fill=KITblue] (axis cs:12.6,0) rectangle (axis cs:13.4,0.000142918232145278);
\draw[draw=none,fill=KITblue] (axis cs:13.6,0) rectangle (axis cs:14.4,0.000076893072805363);

\end{axis}

\end{tikzpicture}
    \vspace*{-1.8em}
    \caption{Ratio of decoded \acp{URP} over the depth $d_\mathrm{p}$ of generated \acp{PT}. Analyzed $2000 \cdot 100$ \acp{URP} using $190$ new decoders referencing \acp{PT} per $100$ \acp{URP} for the \ac{DE} polar code $\mathcal{C}(256,128)$ with SC decoding.}
    \label{fig:performance_analysis_depth_de_sc}
\end{figure}
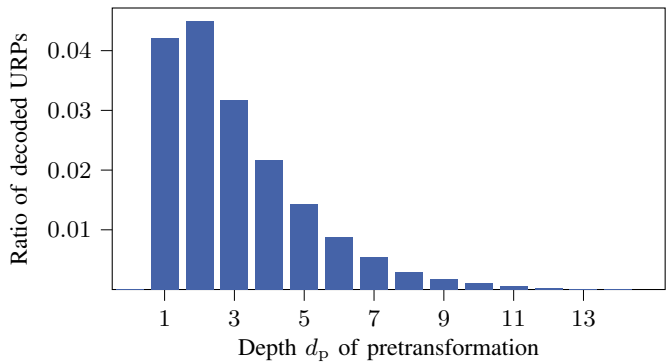

\subsection{Design of an Effective Ensemble for ScED}
\label{sec:design_ensemble}

To design a well-performing \ac{SCED} consisting of $M$ paths of \ac{SCL}-based subcode decodings, we identify a small subset of $M$ candidates from a larger set of $r$ candidates, where $M\ll r$.
To this end, we employ an algorithm that refines the methods proposed in \cite{kraft_ensemble_2023,mandelbaum_subcode_2025}: 
choose $r \in \mathbb{N}$ \PTC s with their associated subcode decoding and sample $s$ \mbox{\acp{URP}-SCL-$L$}.
Then, following \cite{mandelbaum_subcode_2025}, for the $i$th \PTC, we determine the set $\mathcal{S}_i \subseteq [s]$, $i \in [r]$, with $j \in \mathcal{S}_i$ if the $i$th subcode decoding successfully decodes the $j$th \ac{URP}.
In addition to~\cite{mandelbaum_subcode_2025}, we impose the condition that the final estimate of subcode decoding has a lower path metric \cite[Eq. (10)]{balatsoukas-stimming_llr-based_2015} compared to the estimate of stand-alone decoding.
According to \cite{balatsoukas-stimming_llr-based_2015}, the path metric decreases with the likelihood of the codeword.
By definition, stand-alone decoding cannot correctly decode a \ac{URP}.
If stand-alone decoding results in an estimate with a lower path metric compared to the estimate of the subcode decoding,
 the subcode estimate would not prevail in the ML-in-the-list-decision.
Finally, we use the heuristic proposed in \cite{kraft_ensemble_2023} for identifying a subset $\mathcal{E} \subset \{\mathcal{S}_i\}_{i\in[r]}$ of cardinality $|\mathcal{E}| = M$ that maximizes $|\bigcup_{\mathcal{S}_i \in \mathcal{E}}\mathcal{S}_i|$,  i.e., we aim at determining $M$ subcode decodings that  can jointly correct as many \acp{URP} as possible.

\section{Results}
In the following, we consider a very short block length 5G polar code $\mathcal{C}_T(64,32)$ and longer polar codes $\mathcal{C}_T(256,k)$, where $k\in\{64,128,192\}$.
We employ the CRC-$6$ polynomial $\mathrm{0x03}$ and the CRC-$11$ polynomial $0$x$621$ as \PTA, respectively.
To assess the performance of \ac{SCED}, we consider a target \ac{FER} of $10^{-3}$ and conduct Monte-Carlo simulations using a 
\ac{BI-AWGN} channel with at least $1000$ frame errors per simulated point.
The notation \ac{SCED}-$M$-\ac{SCL}-$L$ represents an \ac{SCED} consisting of $M$ \mbox{\ac{SCL}-$L$-based} subcode decodings. The selection of the $M$ \PTC s follows the method proposed in Sec.~\ref{sec:design_ensemble} with ${r=30000}$ and ${s=1000}$.
\mbox{Figs.~\ref{fig:64_32_EDcomparison}--\ref{fig:256_128_FER}} depict the \ac{FER} over $E_\mathrm{b}/N_0$ for \ac{SCED} of the different polar codes, with ensemble size ${M=2}$ and ${M=8}$, respectively.
Furthermore, for comparison, both figures include the performance of \ac{SCL} decoding (using CRC-based list selection) with varying list sizes. For the shorter polar code, we include the performance of \ac{ML} decoding.

At the target \ac{FER}, we observe that \ac{SCED}-$M$-\ac{SCL}-$L$ consistently outperforms \ac{SCL}-$L$ for all considered codes with gains ranging from $0.1\,\mathrm{dB}$ to $0.25\,\mathrm{dB}$.
Furthermore, for the shorter polar code, \ac{SCED} employing two \ac{SCL}-$L$ decodings can match the performance of \ac{SCL}-$2L$ decoding for $L\in\{8, 16\}$, despite the fact that both paths employ only half the list size. 
For the longer codes of rate $\frac{1}{4}$ and $\frac{1}{2}$, respectively, \ac{SCED} requires $8$ rather than $2$ paths to match the decoding performance of stand-alone \ac{SCL}-$2L$ decoding at the target \ac{FER}.

As an \ac{SCL}-$2L$ decoder typically requires more than $5$ times the hardware area of an \ac{SCL}-$L$ decoder \cite{10104534}, increasing the list size is not always feasible. 
\ac{SCED} offers the possibility of efficient hardware re-utilization enabling the decoding performance of \ac{SCL}-$2L$ when chip area is limited.
Alternatively, if area requirements are not as stringent, paths of an \ac{SCED} can be computed in parallel and, thus, reduce decoding latency.

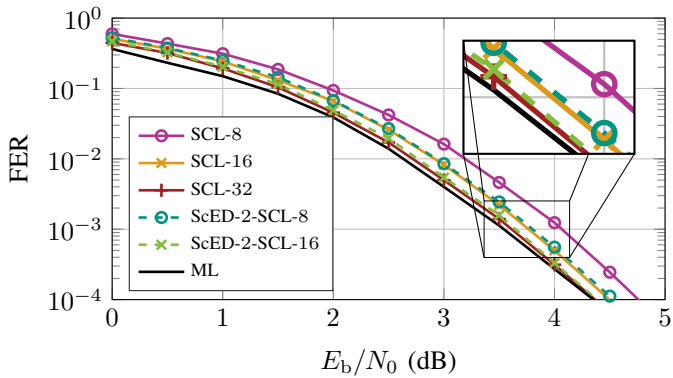
\begin{figure}[t]
    \centering
    \begin{tikzpicture}[spy using outlines={rectangle, magnification=2}]
\begin{axis}[%
width=\columnwidth,
height=\columnwidth * 0.6,
xmin=0,
xmax=5.0,
xlabel style={font=\color{white!15!black}},
xlabel={$E_{\mathrm{b}}/N_0$ \text{(dB)}},
xtick={0, 1.0, 2.0, 3.0, 4.0, 5.0},
ymode=log,
ymin=1e-04,
ymax=1,
ylabel style={font=\color{white!15!black}},
ylabel={FER},
axis background/.style={fill=white},
xmajorgrids,
ymajorgrids,
legend style={at={(0.03,0.03)}, anchor=south west, legend cell align=left, align=left, draw=white!15!black,font=\scriptsize}
]

\addplot[color=KITpurple!80,line width = 1pt, solid,mark=o, mark size=2pt, mark options={solid}]
table[col sep=comma]{
0.0, 5.96e-01
0.5, 4.34e-01
1.0, 3.12e-01
1.5, 1.87e-01
2.0, 9.38e-02
2.5, 4.21e-02
3.0, 1.62e-02
3.5, 4.62e-03
4.0, 1.24e-03
4.5, 2.45e-04
5.0, 4.34e-05
};
\addlegendentry{SCL-$8$};

\addplot[color=KITorange,line width = 1pt, solid, mark=x, mark size=2.5pt, mark options={solid}]
table[col sep=comma]{
0.0, 5.07e-01
0.5, 3.68e-01
1.0, 2.42e-01
1.5, 1.31e-01
2.0, 6.63e-02
2.5, 2.59e-02
3.0, 8.23e-03
3.5, 2.21e-03
4.0, 4.99e-04
4.5, 9.91e-05
};
\addlegendentry{SCL-$16$};

\addplot[color=KITred,line width = 1pt, solid, mark=+, mark size=2.5pt, mark options={solid}]
table[col sep=comma]{
0.0, 4.45e-01
0.5, 3.19e-01
1.0, 1.93e-01
1.5, 1.04e-01
2.0, 4.53e-02
2.5, 1.71e-02
3.0, 4.98e-03
3.5, 1.40e-03
4.0, 3.13e-04
4.5, 6.58e-05
};
\addlegendentry{SCL-$32$};

\addplot[color=KITgreen,line width = 1pt, dashed, mark=o, mark size=2pt, mark options={solid}]
table[col sep=comma]{
0.0, 5.27e-01
0.5, 3.76e-01
1.0, 2.50e-01
1.5, 1.44e-01
2.0, 6.74e-02
2.5, 2.69e-02
3.0, 8.55e-03
3.5, 2.44e-03
4.0, 5.57e-04
4.5, 1.12e-04
5.0, 1.83e-05
};
\addlegendentry{ScED-$2$-SCL-$8$};

\addplot[color=KITpalegreen,line width = 1pt, dashed, mark=x, mark size=2.5pt, mark options={solid}]
table[col sep=comma]{
0.0, 4.68e-01
0.5, 3.25e-01
1.0, 2.04e-01
1.5, 1.15e-01
2.0, 4.90e-02
2.5, 1.96e-02
3.0, 5.43e-03
3.5, 1.58e-03
4.0, 3.31e-04
4.5, 6.97e-05
};
\addlegendentry{ScED-$2$-SCL-$16$};

\addplot[color=black,line width = 1pt, solid, mark size=2.5pt, mark options={solid}]
table[col sep=comma]{
0.00, 3.643e-01
0.50, 2.318e-01
1.00, 1.496e-01
1.50, 8.415e-02
2.00, 3.886e-02
2.50, 1.392e-02
3.00, 3.995e-03
3.50, 1.128e-03
4.00, 2.720e-04
4.50, 6.565e-05
};
\addlegendentry{ML};

\coordinate (spypoint) at (axis cs:3.75,0.001);
			\coordinate (spyviewer) at (axis cs:3.95,0.075);	
			\spy[width=2.25cm,height=1.5cm, thin, spy connection path={\draw(tikzspyonnode.south west) -- (tikzspyinnode.south west);\draw (tikzspyonnode.south east) -- (tikzspyinnode.south east);
			\draw (tikzspyonnode.north west) -- (tikzspyinnode.north west);\draw (tikzspyonnode.north east) -- (intersection of  tikzspyinnode.north east--tikzspyonnode.north east and tikzspyinnode.south east--tikzspyinnode.south west);
			;}] on (spypoint) in node at (spyviewer);
		\coordinate (a) at ($(axis cs:-10.8/1.4,-0.12)+(spyviewer)$);
		\coordinate[label={[font=\small,text=black]right:$10^{-2}$}] (b) at ($(axis cs:+10.8/1.4,-0.12)+(spyviewer)$);
\end{axis}
\end{tikzpicture}
    \vspace*{-1.8em}
    \caption{Decoder performances for a 5G polar code $\mathcal{C}_T(64, 32)$  employing the \ac{CRC}-$6$ polynomial $0$x$03$ as a \PTA.}
    \label{fig:64_32_EDcomparison}
\end{figure}

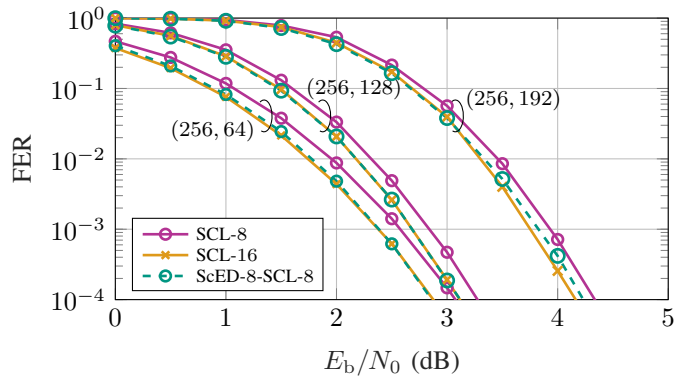
\begin{figure}[t]
    \centering
    \begin{tikzpicture}
\begin{axis}[
width=\columnwidth,
height=\columnwidth*0.6,
xmin=0,
xmax=5.0,
xlabel style={font=\color{white!15!black}},
xlabel={$E_{\mathrm{b}}/N_0$ (dB)},
ymode=log,
ymin=1e-04,
ymax=1,
ylabel style={font=\color{white!15!black}},
ylabel={FER},
axis background/.style={fill=white},
xmajorgrids,
ymajorgrids
]

\addplot[color=KITpurple!80,line width = 1pt, dashdotted]
table{
0 5
};

\addplot[color=KITorange,line width = 1pt, solid]
table{
0 5
};

\addplot[color=KITgreen,line width = 1pt, dashed]
table{
0 5
};

\addplot[color=KITpurple!80,line width = 1pt, solid,mark=o,mark size=2pt, mark options={solid}]
table{
0   4.68e-01
0.5 2.76e-01
1   1.18e-01
1.5 3.78e-02
2   8.75e-03
2.5 1.41e-03
3   1.47e-04
3.5 1.14e-05
};
\label{SCL8}

\addplot[color=KITorange,line width = 1pt, solid, mark=x,mark size=2pt, mark options={solid}]
table{
0   3.76e-01   
0.5 1.95e-01
1   7.60e-02
1.5 2.15e-02
2   4.38e-03
2.5 6.21e-04
3   5.44e-05
};
\label{SCL16}

\addplot[color=KITgreen, line width = 1pt, dashed, mark=o,mark size=2pt, mark options={solid}]
table{
0   4.05e-01
0.5 2.09e-01
1   8.22e-02
1.5 2.44e-02
2   4.82e-03
2.5 6.20e-04
3   5.02e-05
};
\label{ScED}

\addplot[color=KITpurple!80,line width = 1pt, solid,mark=o,mark size=2pt, mark options={solid}]
table{
0   8.33e-01
0.5 6.17e-01
1   3.54e-01
1.5 1.31e-01
2   3.33e-02
2.5 4.90e-03
3   4.71e-04
3.5 2.81e-05
};

\addplot[color=KITorange, line width = 1pt, solid, mark=x,mark size=2.5pt, mark options={solid}]
table{
0   7.93e-01
0.5 5.60e-01
1   2.87e-01
1.5 9.70e-02
2   2.08e-02
2.5 2.61e-03
3   1.91e-04
3.5 1.14e-05
};

\addplot[color=KITgreen,line width = 1pt, dashed, mark=o,mark size=2.5pt, mark options={solid}]
table{
0   7.82e-01
0.5 5.46e-01
1   2.82e-01
1.5 9.30e-02
2   2.08e-02
2.5 2.64e-03
3   1.87e-04
3.5 1.35e-05
};

\addplot[color=KITpurple!80,line width = 1pt, solid,mark=o,mark size=2pt, mark options={solid}]
table{
0   9.99e-01
0.5 9.88e-01
1   9.42e-01
1.5 7.85e-01
2   5.35e-01
2.5 2.15e-01
3   5.66e-02
3.5 8.54e-03
4   7.15e-04
4.5 3.82e-05
};

\addplot[color=KITorange, line width = 1pt, solid, mark=x,mark size=2.5pt, mark options={solid}]
table{
0   9.98e-01
0.5 9.85e-01
1   9.18e-01
1.5 7.47e-01
2   4.46e-01
2.5 1.72e-01
3   3.95e-02
3.5 4.01e-03
4   2.56e-04
4.5 1.43e-05
};

\addplot[color=KITgreen,line width = 1pt, dashed, mark=o,mark size=2.5pt, mark options={solid}]
table{
0   9.97e-01
0.5 9.79e-01
1   9.16e-01
1.5 7.23e-01
2   4.26e-01
2.5 1.65e-01
3   3.80e-02
3.5 5.17e-03
4   4.18e-04
4.5 1.91e-05
};

\node [draw,fill=white, font=\scriptsize, anchor=south west] at (rel axis cs: 0.03,0.03) {\shortstack[l]{
\ref{SCL8} SCL-$8$ \\
\ref{SCL16} SCL-$16$ \\
\ref{ScED} ScED-$8$-SCL-$8$}};

\draw (axis cs: 3.05, 0.025) arc(-110:120:0.3em and 0.6em);
\draw (axis cs: 1.85, 0.025) arc(-110:120:0.3em and 0.6em);
\draw (axis cs: 1.32, 0.025) arc(-110:120:0.3em and 0.6em);

\node[] at (axis cs: 3.6, 0.07){\footnotesize$(256,192)$};
\node[] at (axis cs: 2.16, 0.098){\footnotesize$(256,128)$};
\node[] at (axis cs: 0.88, 0.025){\footnotesize$(256,64)$};

\end{axis}
\end{tikzpicture}
    \vspace*{-1.8em}
    \caption{Decoder performances for 5G polar codes $\mathcal{C}_T(N, k)$, where ${k\in\{64,128,192\}}$, employing the \ac{CRC}-$11$ polynomial $0$x$621$.}
    \label{fig:256_128_FER}
\end{figure}

\section{Conclusion}
In this work, using a unified \ac{PT} framework, we have introduced a novel kind of \ac{PT} which generates polar subcodes and, thus, enables \ac{SCED} for polar codes. 
We have optimized the selection of subcodes for \ac{SCED} and compared the optimized ensembles for different parameters to \ac{SC} and \ac{SCL} decoding.
Our results show that, especially for shorter block lengths, \ac{SCED} offers promising performance while allowing for hardware-efficient implementation. 
Alternatively, \ac{SCED} can reduce latency compared to \ac{SCL} decoders of similar performance when hardware requirements are not as strict.

\end{document}